\documentclass[10pt, draftclsnofoot, onecolumn]{IEEEtran}
\usepackage{bbm}
\usepackage{cite}
\usepackage{amsfonts}
\usepackage{mathrsfs}
\usepackage{graphicx}
\usepackage{amssymb}
\usepackage{latexsym}
\usepackage{amsmath}
\usepackage{stfloats}
\usepackage{cases}
\usepackage{setspace}
\usepackage{bigstrut}
\usepackage{bm}
\usepackage{url}
\usepackage{color}



\newtheorem{theorem}{Theorem}

\begin{document}

\title{Large Intelligent Surface-Assisted Wireless Communication Exploiting Statistical CSI}
\author{Yu Han, Wankai Tang, Shi Jin, Chao-Kai Wen, and Xiaoli Ma}
\maketitle

\begin{abstract}
Large intelligent surface (LIS)-assisted wireless communications have drawn attention worldwide. With the use of low-cost LIS on building walls, signals can be reflected by the LIS and sent out along desired directions by controlling its phases, thereby providing supplementary links for wireless communication systems. In this study, we evaluate the performance of an LIS-assisted large-scale antenna system by formulating a tight approximation of the ergodic capacity and investigate the effect of the phase shifts on the ergodic capacity in different propagation scenarios. In particular, we propose an optimal phase shift design based on the ergodic capacity approximation and statistical channel state information. Furthermore, we derive the requirement on the quantization bits of the LIS to promise an acceptable capacity degradation. Numerical results show that using the proposed phase shift design can achieve the maximum ergodic capacity, and a 2-bit quantizer is sufficient to ensure capacity degradation of no more than 1 bit/s/Hz.
\end{abstract}

\section{Introduction}\label{Sec:Introduction}

As a key feature of the fifth-generation (5G) and future mobile communications, large-scale antenna systems can produce high throughput by utilizing the spatial degrees of freedom and achieve wide cell coverage with a high-gain array \cite{Larsson2014}. With these advantages, large-scale antenna systems can satisfy the growing demands of explosive data rate in the coming 5G era. However, hindrances still occur in a large-scale antenna system due to the existence of buildings, trees, cars, and even humans. To address this problem and produce fluent user experience, a typical solution is to add new supplementary links to maintain the communication link. For example, amplify-and-forward (AF) relay can be introduced in areas with poor communication signal to receive the weak signals and then amplify and retransmit them toward the next relay hop or terminal \cite{Rankov2007}. Typically, half-duplex mode is used in an AF relay-aided wireless communication system to avoid self-interference; however, it still causes degradation of system efficiency.

In recent years, large intelligent surface (LIS) technologies have been rapidly developed \cite{Cui2014,Li2017,Zhang2018,Zhao2018,Huang2018,Wu2018}. LIS is a cheap passive artificial structure that can digitally manipulate electromagnetic waves and obtain preferable electromagnetic propagation environment with limited power consumption \cite{Cui2014,Li2017,Zhang2018}. Modern LIS includes reprogrammable metasurface, which has been suggested to replace the radio frequency module in the traditional wireless communication system and reform the wireless communication architecture, thereby providing new opportunities in future communication networks \cite{Zhao2018}. Modern LIS also includes reflector array, which can reflect the electromagnetic wave, thereby offering support to traditional wireless communication systems \cite{Huang2018}. Reflector array and AF relay work in different mechanisms to provide supplementary links. Although the incident signal at a reflector array is reflected without being scaled, the propagation environment can be improved using extremely low power consumption. Moreover, using the reflector array enables the application of the full-duplex mode without causing self-interference \cite{Wu2018}. Therefore, reflector array is a more economical and efficient choice compared with the AF relay.

In this study, we apply the reflector-array-type LIS in large-scale antenna systems to provide supplementary links between the base station (BS) and the user when the line-of-sight (LoS) path is absent in this channel. By adjusting the phases of the incident signal, LIS can reflect the signal toward the desired spatial direction. We first obtain an approximation of the ergodic capacity of the LIS-assisted large-scale antenna system when the LIS assistant link is under Rician fading condition. We find that the ergodic capacity is largely dependent on the phase shift amount introduced by the LIS. To maximize the ergodic capacity, we propose an optimal phase shift design by exploiting statistical channel state information (CSI). Furthermore, in view of the hardware impairments, we formulate the requirement on the quantization bits of the LIS to ensure an acceptable degradation of the ergodic capacity. Numerical results verify the tightness of the derived ergodic capacity approximation expression, demonstrate the effectiveness of the optimal phase shift design, and show that a 2-bit quantizer is sufficient to ensure capacity degradation of less than 1 bit/s/Hz.

\section{System Model}\label{Sec:SystemModel}

We focus on a single cell of an LIS-assisted large-scale antenna system, as shown in Fig.~\ref{Fig:scenario}. The BS is equipped with an $M$-element large uniform linear array (ULA), which serves a single-antenna user. An LIS is set up between the BS and the user, comprising $N$ reflector elements arranged in a ULA.

\begin{figure}
  \centering
  \includegraphics[scale=0.8]{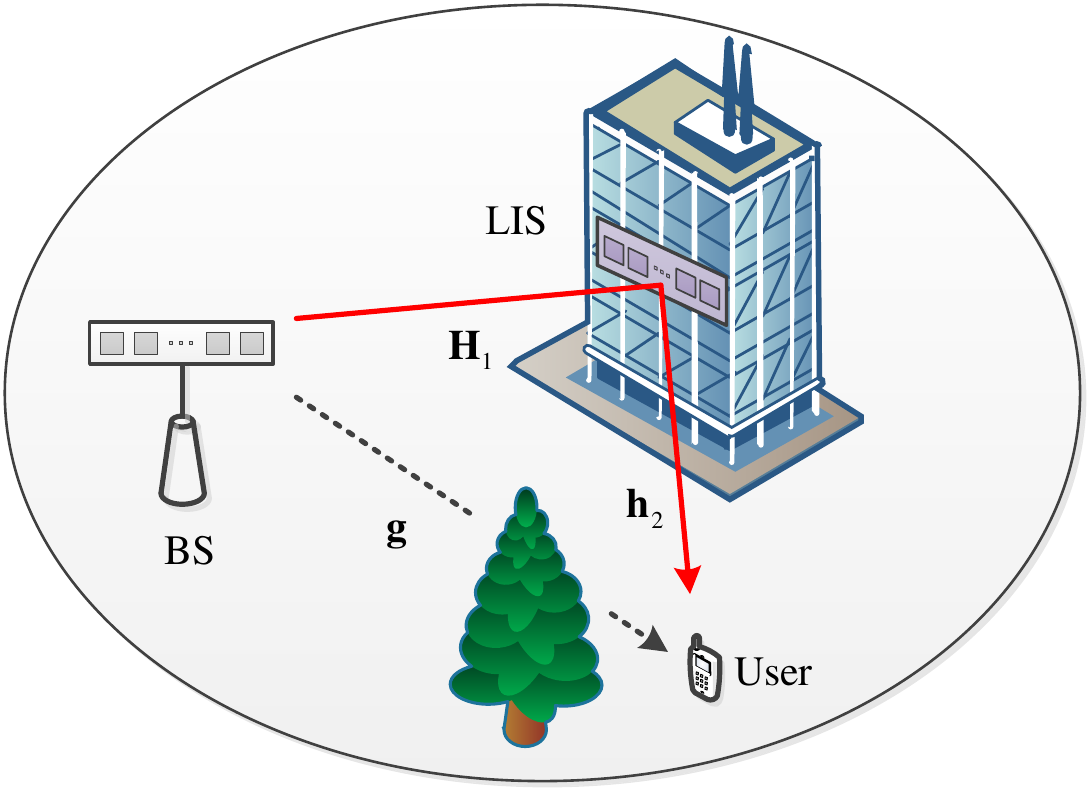}
  \caption{LIS-assisted large-scale antenna system. The LoS component between the BS and the user may be blocked. Meanwhile, LoS components exist between the BS and the LIS and between the LIS and the user.} \label{Fig:scenario}
\end{figure}

The LoS path between the BS and the user may be blocked; however, the wireless channel is filled with plenty of scatters. We model the propagation environment between the BS and the user as Rayleigh fading and denote the channel as ${\bf g}\in \mathbb{C}^{1\times M}$, where the elements of $\bf g$ are i.i.d. in complex Gaussian distribution with zero mean and unit variance. For the channel between the BS and the LIS and that between the LIS and the user, LoS components exist in practical implementation. Therefore, these two channels are modeled in Rician fading. We denote the channel between the BS and the LIS as
\begin{equation}\label{Eq:channelH1}
{\bf H}_1 = \sqrt{\frac{K_1}{K_1+1}}\bar{\bf H}_1 + \sqrt{\frac{1}{K_1+1}}\tilde{\bf H}_1,
\end{equation}
where $K_1$ is the Rician $K$-factor of ${\bf H}_1$; $\bar{\bf H}_1 \in \mathbb{C}^{N\times M}$ is the LoS component, which remains unchanged within the channel coherence time; and $\tilde{\bf H}_1 \in \mathbb{C}^{N\times M}$ is the non-LoS (NLoS) component. The elements of $\tilde{\bf H}_1$ are i.i.d. complex Gaussian distributed, and each element has zero mean and unit variance. Similarly, the channel between the LIS and the user is expressed as
\begin{equation}\label{Eq:channelh2}
{\bf h}_2 = \sqrt{\frac{K_2}{K_2+1}}\bar{\bf h}_2 + \sqrt{\frac{1}{K_2+1}}\tilde{\bf h}_2,
\end{equation}
where $K_2$ is the Rician $K$-factor of ${\bf h}_2$, $\bar{\bf h}_2 \in \mathbb{C}^{1\times N}$ is the LoS component, and $\tilde{\bf h}_2 \in \mathbb{C}^{1\times N}$ is the NLoS component. Each element of $\tilde{\bf h}_2$ is i.i.d. complex Gaussian distributed with zero mean and unit variance.

The LoS components are expressed by the responses of the ULA. The array response of an $N$-element ULA is
\begin{equation}\label{Eq:ULAsteer}
{\bf a}_N(\theta) = \left[1,e^{j2\pi \frac{d}{\lambda}\sin\theta},\ldots,e^{j2\pi \frac{d}{\lambda}(N-1)\sin\theta}\right],
\end{equation}
where $\theta$ is the angle of departure (AoD) or angle of arrival (AoA) of a signal. Under this condition, the LoS component $\bar{\bf H}_1$ is expressed as
\begin{equation}\label{Eq:H1LoS}
\bar{\bf H}_1 = {\bf a}^H_N(\theta_{{\rm AoA},1}){\bf a}_M(\theta_{{\rm AoD},1}),
\end{equation}
where $\theta_{{\rm AoD},1}$ is the AoD from the ULA at the BS, and $\theta_{{\rm AoA},1}$ is the AoA to the ULA at the LIS. Similarly, the LoS component $\bar{\bf h}_2$ is
\begin{equation}\label{Eq:h2LoS}
\bar{\bf h}_2 = {\bf a}_N(\theta_{{\rm AoD},2}),
\end{equation}
where $\theta_{{\rm AoD},2}$ is the AoD from the ULA at the LIS.

We focus on the downlink of the LIS-assisted large-scale antenna system. The signal can travel along $\bf g$ directly, or be reflected by the LIS. When reflected by the LIS, the phases of the signals are changed. The received signal at the user side is expressed as
\begin{equation}\label{Eq:signalmodel}
r = \sqrt{P}({\bf h}_2 {\bf\Phi} {\bf H}_1+{\bf g}){\bf f}^Hs + w,
\end{equation}
where $P$ is the transmit power, ${\bf\Phi} = {\rm diag}\{e^{j\phi_1},\ldots,e^{j\phi_N}\}$, $\phi_n \in [0,2\pi)$ is the phase shift introduced by the $n$th element of the LIS, ${\bf f}\in\mathbb{C}^{1\times M}$ is the beamforming vector satisfying $\|{\bf f}\|^2=1$, $s$ is original signal satisfying $\mathbb{E}\{|s|^2\}=1$, $\mathbb{E}\{\cdot\}$ represents taking expectation, and $w$ is the complex Gaussian noise with zero mean and unit variance. We denote ${\bf h}={\bf h}_2 {\bf\Phi} {\bf H}_1$ as the assistant channel. We also view ${\bf\Phi}$ as the beamforming weight introduced by the LIS.
Maximum ratio transmitting is adopted to enhance the signal power. Then, assuming ${\bf h}+{\bf g}$ is known at BS, the beamforming vector ${\bf f}$ is defined as
\begin{equation}\label{Eq:MRT}
{\bf f} = \frac{{\bf h}+{\bf g}}{\|{\bf h}+{\bf g}\|}.
\end{equation}
Note that $({\bf h}+{\bf g})$ can be estimated by the BS using the pilots sent from the user. By applying \eqref{Eq:MRT}, we obtain the ergodic capacity of the LIS-assisted large-scale antenna system as
\begin{equation}\label{Eq:ergcapacity}
C = \mathbb{E}\left\{\log_2 \left(1+ \frac{P}{\sigma_w^2}\|{\bf h}_2 {\bf\Phi} {\bf H}_1+{\bf g}\|^2\right)\right\}.
\end{equation}
where $\sigma_w^2$ is the noise variance.

We aim to identify the optimal phase shift design ${\bf\Phi}$ at the LIS to maximize the ergodic capacity. This long-term design utilizes the statistical CSI and can remain unchanged within the channel coherence time.

\section{Ergodic Capacity Analysis}\label{Sec:CapacityAnalysis}

Before designing the phase shift, we first theoretically analyze the ergodic capacity of the LIS-assisted large-scale antenna system and then evaluate the effects of using different phase shift amounts under different propagation scenarios.

\subsection{Approximation of Ergodic Capacity}

To have a direct cognition on the ergodic capacity, we provide an approximation expression in the following theorem.

\begin{theorem}\label{theorem1}
The ergodic capacity of the LIS-assisted large-scale antenna system can be approximated as
\begin{equation}\label{Eq:Theorem}
C \approx \log_2 \left(1+P\left(\gamma_1 \|\bar{\bf h}_2 {\bf\Phi} \bar{\bf H}_1\|^2+\gamma_2 MN+M\right)\right),
\end{equation}
where
\begin{equation}\label{Eq:gamma1}
\gamma_1 = \frac{K_1K_2}{(K_1+1)(K_2+1)},
\end{equation}
and
\begin{equation}\label{Eq:gamma2}
\gamma_2 = \frac{K_1+K_2+1}{(K_1+1)(K_2+1)}.
\end{equation}
\end{theorem}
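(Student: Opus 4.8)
The plan is to use Jensen's inequality to move the expectation inside the logarithm, so that the problem reduces to computing $\mathbb{E}\{\|{\bf h}_2{\bf\Phi}{\bf H}_1+{\bf g}\|^2\}$ exactly, and then treating the resulting expression as the argument of $\log_2(1+\cdot)$. Since $\log_2(1+x)$ is concave, Jensen gives $C \le \log_2(1+(P/\sigma_w^2)\mathbb{E}\{\|{\bf h}_2{\bf\Phi}{\bf H}_1+{\bf g}\|^2\})$; the approximation claim is that this bound is tight in the relevant regime (large $M$, so that the squared norm concentrates around its mean). I would state that the approximation is an application of Jensen and that its accuracy is corroborated by the concentration of $\|{\bf h}\|^2$ for large arrays (to be confirmed numerically later in the paper). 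Note the paper's statement has absorbed $\sigma_w^2$ into the normalization (unit-variance noise), so the target is $\log_2(1+P\,\mathbb{E}\{\|{\bf h}_2{\bf\Phi}{\bf H}_1+{\bf g}\|^2\})$.

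The core computation is the mean of the squared norm. First I would expand $\|{\bf h}_2{\bf\Phi}{\bf H}_1+{\bf g}\|^2 = \|{\bf h}_2{\bf\Phi}{\bf H}_1\|^2 + \|{\bf g}\|^2 + 2\,\mathrm{Re}\{{\bf h}_2{\bf\Phi}{\bf H}_1{\bf g}^H\}$. Because ${\bf g}$ is zero-mean and independent of ${\bf h}_2,{\bf H}_1$, the cross term vanishes in expectation, and $\mathbb{E}\{\|{\bf g}\|^2\}=M$. The remaining term $\mathbb{E}\{\|{\bf h}_2{\bf\Phi}{\bf H}_1\|^2\}$ is handled by substituting the Rician decompositions \eqref{Eq:channelH1} and \eqref{Eq:channelh2}, expanding ${\bf h}_2{\bf\Phi}{\bf H}_1$ into four terms (LoS-LoS, LoS-NLoS, NLoS-LoS, NLoS-NLoS), and using independence and zero-mean-ness of $\tilde{\bf H}_1,\tilde{\bf h}_2$ to kill all cross terms. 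The four surviving squared-norm expectations are: the deterministic $\gamma_1\|\bar{\bf h}_2{\bf\Phi}\bar{\bf H}_1\|^2$; and three terms of the form (coefficient)$\times\mathbb{E}\{\|(\text{deterministic row vector}){\bf\Phi}(\text{Gaussian matrix})\|^2\}$ or similar, each of which evaluates to a constant times $MN$ after using $\mathbb{E}\{\tilde{\bf H}_1^H\tilde{\bf H}_1\}=N{\bf I}_M$, $\mathbb{E}\{\tilde{\bf h}_2^H\tilde{\bf h}_2\}=N^{-1}\cdot(\text{identity-type})$ — more precisely, using that the entries are unit-variance i.i.d. and that $|e^{j\phi_n}|=1$ so ${\bf\Phi}$ is unitary. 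Collecting the coefficients, the LoS-NLoS term contributes $\frac{K_2}{(K_1+1)(K_2+1)}MN$ (the NLoS component of ${\bf H}_1$ against the LoS ${\bf h}_2$ is misstated; I mean the term with $\bar{\bf h}_2$ and $\tilde{\bf H}_1$), the NLoS-LoS term $\frac{K_1}{(K_1+1)(K_2+1)}MN$, and the NLoS-NLoS term $\frac{1}{(K_1+1)(K_2+1)}MN$, which sum to $\gamma_2 MN$ with $\gamma_2$ as in \eqref{Eq:gamma2}.

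The main obstacle is the careful bookkeeping of the NLoS-NLoS term $\mathbb{E}\{\|\tilde{\bf h}_2{\bf\Phi}\tilde{\bf H}_1\|^2\}$: here both factors are random, so one must compute $\mathbb{E}\{\tilde{\bf h}_2{\bf\Phi}\tilde{\bf H}_1\tilde{\bf H}_1^H{\bf\Phi}^H\tilde{\bf h}_2^H\}$ by first taking the expectation over $\tilde{\bf H}_1$ (yielding $N\,\mathrm{tr}({\bf\Phi}\mathbb{E}\{\tilde{\bf h}_2^H\tilde{\bf h}_2\}{\bf\Phi}^H)=N\,\mathrm{tr}({\bf\Phi}\,{\bf I}_N\,{\bf\Phi}^H)\cdot$ — wait, more carefully, conditioning on $\tilde{\bf h}_2$ gives $\mathbb{E}_{\tilde{\bf H}_1}\{\|\tilde{\bf h}_2{\bf\Phi}\tilde{\bf H}_1\|^2\}=M\|\tilde{\bf h}_2{\bf\Phi}\|^2=M\|\tilde{\bf h}_2\|^2$ since ${\bf\Phi}$ is unitary), and then $\mathbb{E}\{\|\tilde{\bf h}_2\|^2\}=N$, giving $MN$. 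The other terms are analogous but with only one random factor. Once all pieces carry a clean factor of $MN$ or the deterministic norm, the prefactors are just products of the Rician weights, and regrouping them into $\gamma_1,\gamma_2$ is routine algebra. I would present the Jensen step and the expansion explicitly, then tabulate the four term-by-term expectations, and finally combine.
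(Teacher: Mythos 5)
Your proposal is correct and follows essentially the same route as the paper: the same Jensen-type approximation $\mathbb{E}\{\log_2(1+x)\}\approx\log_2(1+\mathbb{E}\{x\})$, the same four-term Rician decomposition with cross terms vanishing by independence and zero means, and the same term-by-term expectations ($M$, $K_1MN$, $K_2MN$, $MN$ up to the common normalization) recombined into $\gamma_1$ and $\gamma_2$. Your added remarks that the log step is an upper bound by concavity and that tightness rests on concentration of the squared norm are a slight refinement of the paper's bare ``$\approx$,'' but not a different argument.
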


\begin{proof}
According to the characteristic of the logarithmic function, it holds that
\begin{equation}\label{Eq:logcharacter}
\mathbb{E}\left\{\log_2\left(1+x\right)\right\} \approx \log_2\left(1+\mathbb{E}\left\{x\right\}\right).
\end{equation}
Besides, $\sigma_w^2=1$. Hence, \eqref{Eq:ergcapacity} satisfies
\begin{equation}\label{Eq:Theoproof1}
C \approx \log_2 \left(1+P \mathbb{E}\left\{\|{\bf h}_2 {\bf\Phi} {\bf H}_1+{\bf g}\|^2\right\}\right).
\end{equation}
Then, we focus on the derivation of $\mathbb{E}\{\|{\bf h}_2 {\bf\Phi} {\bf H}_1+{\bf g}\|^2\}$. We decompose $\|{\bf h}_2 {\bf\Phi} {\bf H}_1+{\bf g}\|^2$ by  
\begin{equation}\label{Eq:Theoproof2}
\left\|{\bf h}_2 {\bf\Phi} {\bf H}_1+{\bf g}\right\|^2 = \left\|\sqrt{\frac{1}{(K_1+1)(K_2+1)}} \left(\underbrace{\sqrt{K_1K_2}\bar{\bf h}_2 {\bf\Phi} \bar{\bf H}_1}_{{\bf x}_1} + \underbrace{\sqrt{K_1}\tilde{\bf h}_2 {\bf\Phi} \bar{\bf H}_1}_{{\bf x}_2} + \underbrace{\sqrt{K_2}\bar{\bf h}_2 {\bf\Phi} \tilde{\bf H}_1}_{{\bf x}_3} + \underbrace{\tilde{\bf h}_2 {\bf\Phi} \tilde{\bf H}_1}_{{\bf x}_4}\right)+{\bf g}\right\|^2.
\end{equation}
The first item ${\bf x}_1$ is constant, and $\mathbb{E} \{{\bf x}_i\} = {\bf 0}$ holds for $i=2,3,4$.
Since $\tilde{\bf H}_1$, $\tilde{\bf h}_2$ and ${\bf g}$ have zero means and are independent with each other, we can derive that
\begin{equation}\label{Eq:Theoproof7}
\mathbb{E}\left\{\|{\bf h}_2 {\bf\Phi} {\bf H}_1+{\bf g}\|^2\right\} = \mathbb{E} \left\{ \|{\bf g}\|^2\right\}+ \frac{ \|{\bf x}_1\|^2 + \mathbb{E} \left\{ \|{\bf x}_2\|^2+\|{\bf x}_3\|^2+\|{\bf x}_4\|^2\right\} }{(K_1+1)(K_2+1)} .
\end{equation}
For the channel between the BS and the user, it holds that
\begin{equation}\label{Eq:Theoproof8}
\mathbb{E} \left\{ \|{\bf g}\|^2\right\} = M.
\end{equation}
For the assistant channel, we derive from \eqref{Eq:H1LoS} that
\begin{equation}\label{Eq:Theoproof9}
\mathbb{E} \left\{ \|{\bf x}_2\|^2\right\} = K_1\mathbb{E} \left\{ \tilde{\bf h}_2 {\bf\Phi} \bar{\bf H}_1\bar{\bf H}_1^H{\bf\Phi}^H \tilde{\bf h}_2^H\right\}=K_1 \|\bar{\bf H}_1\|^2_F = K_1 MN.
\end{equation}
Similarly,
\begin{equation}\label{Eq:Theoproof10}
\mathbb{E} \left\{ \|{\bf x}_3\|^2\right\} = K_2 \bar{\bf h}_2 {\bf\Phi} \mathbb{E} \left\{\tilde{\bf H}_1 \tilde{\bf H}_1^H\right\} {\bf\Phi}^H \bar{\bf h}_2^H
= K_2 M \|\bar{\bf h}_2\|^2 = K_2 MN,
\end{equation}
and
\begin{equation}\label{Eq:Theoproof11}
\mathbb{E} \left\{ \|{\bf x}_4\|^2\right\} = \mathbb{E} \left\{\tilde{\bf h}_2 {\bf\Phi} \mathbb{E} \left\{\tilde{\bf H}_1 \tilde{\bf H}_1^H\right\} {\bf\Phi}^H \tilde{\bf h}_2^H\right\}
=M\mathbb{E} \left\{\|\tilde{\bf h}_2\|^2\right\} = MN.
\end{equation}
By applying \eqref{Eq:Theoproof8}-\eqref{Eq:Theoproof11} into \eqref{Eq:Theoproof7}, we obtain
\begin{equation}\label{Eq:Theoproof12}
\mathbb{E}\left\{\|{\bf h}_2 \Phi {\bf H}_1+{\bf g}\|^2\right\} = \gamma_1 \|\bar{\bf h}_2 {\bf\Phi} \bar{\bf H}_1\|^2+\gamma_2 MN+M.
\end{equation}
Finally, \eqref{Eq:Theorem} is obtained.
\end{proof}

\emph{Theorem \ref{theorem1}} indicates that when the transmit power and the LoS components remain unchanged, the ergodic capacity of the LIS-assisted large-scale antenna system is determined by $\gamma_1$, $\gamma_2$, and $\|\bar{\bf h}_2 {\bf\Phi} \bar{\bf H}_1\|^2$, which are further determined by the Rician-$K$ factors in the assistant channel and the phase shift amounts at the LIS.

\subsection{Effects of Rician-$K$ Factors and Phase Shifts}

To acquire deep insights on the effects of the Rician-$K$ factors and the phase shift amounts on the ergodic capacity, we investigate the ergodic capacity under the following special cases.

\emph{Case 1}: If $K_1 = 0$ or $K_2 = 0$, then the ergodic capacity of the LIS-assisted large-scale antenna system approximates
\begin{equation}\label{Eq:Case1}
C \approx \log_2 \left(1+PM(N+1)\right).
\end{equation}

We observe that when either ${\bf H}_1$ or ${\bf h}_2$ is under Rayleigh fading condition, the ergodic capacity is proportional to the number of elements in the large-scale antenna array and that in the LIS but is independent of the phase shift amounts at the LIS. This phenomenon is caused by the spatial isotropy that holds upon the assistant channel, which is insensitive to the beamforming between ${\bf H}_1$ and ${\bf h}_2$. Under this condition, even 0-bit phase-shifting is sufficient, and the phase shift amount can be set arbitrarily.

\emph{Case 2}: If $K_1, K_2 \to \infty$, then the ergodic capacity of the LIS-assisted large-scale antenna system approaches
\begin{equation}\label{Eq:Case2}
C \to \log_2 \left(1+P(\| \bar{\bf h}_2 {\bf\Phi} \bar{\bf H}_1 \|^2+M)\right).
\end{equation}

In the extreme Rician fading condition, only LoS components exist, and the assistant channel remains unchanged. The ergodic capacity increases in proportion to $\| \bar{\bf h}_2 {\bf\Phi} \bar{\bf H}_1 \|^2$. That is to say, in spatially directional propagation environment, the ergodic capacity of the LIS-assisted large-scale antenna is sensitive to the beamforming weights at the LIS.

If the amount of phase shifts is properly set, then the wireless signal can be beamformed on the main lobe of the assistant channel. Then, $\| \bar{\bf h}_2 {\bf\Phi} \bar{\bf H}_1 \|^2 \gg N$, and the ergodic capacity under extreme Rician fading condition is considerably higher than the ergodic capacity under Rayleigh fading condition when comparing \eqref{Eq:Case1} with \eqref{Eq:Case2}. Otherwise, when $\| \bar{\bf h}_2 {\bf\Phi} \bar{\bf H}_1 \|^2 < N$, the ergodic capacity under extreme Rician fading condition is even inferior to the ergodic capacity under Rayleigh fading condition.
Therefore, in Rician fading condition, the amount of phase shifts should be carefully designed to fully utilize the LoS components of the assistant channel $\bf h$.

\section{Phase Shift Design of Reflector Array}\label{Sec:PhaseShift}

In this section, we propose an optimal design of ${\bf\Phi}$ to maximize the ergodic capacity exploiting the statistical CSI and provide a criterion of the quantization bit to ensure an acceptable ergodic capacity degradation of the LIS-assisted large-scale antenna system.

\subsection{Optimal Phase Shift Design}

On the basis of the theoretical results in \emph{Section \ref{Sec:CapacityAnalysis}}, maximizing the ergodic capacity can be translated to maximize $\| \bar{\bf h}_2 {\bf\Phi} \bar{\bf H}_1 \|^2$. Thus, the optimal ${\bf\Phi}$ satisfies
\begin{equation}\label{Eq:optPhi}
{\bf\Phi}_{\rm opt} = \max_{{\bf\Phi}} \| \bar{\bf h}_2 {\bf\Phi} \bar{\bf H}_1 \|^2.
\end{equation}
Applying \eqref{Eq:H1LoS} and \eqref{Eq:h2LoS} into \eqref{Eq:optPhi} yields
\begin{equation}\label{Eq:optPhi1}
{\bf\Phi}_{\rm opt} = \max_{{\bf\Phi}} \| {\bf a}_N(\theta_{{\rm AoD},2}) {\bf\Phi} {\bf a}^H_N(\theta_{{\rm AoA},1}){\bf a}_M(\theta_{{\rm AoD},1}) \|^2.
\end{equation}
Denote $z = {\bf a}_N(\theta_{{\rm AoD},2}) {\bf\Phi} {\bf a}^H_N(\theta_{{\rm AoA},1})$. Then, \eqref{Eq:optPhi1} can be rewritten as
\begin{equation}\label{Eq:optPhi2}
{\bf\Phi}_{\rm opt} = \max_{{\bf\Phi}} \| z{\bf a}_M(\theta_{{\rm AoD},1}) \|^2 = \max_{{\bf\Phi}} |z|^2 \| {\bf a}_M(\theta_{{\rm AoD},1}) \|^2,
\end{equation}
where $\| {\bf a}_M(\theta_{{\rm AoD},1}) \|^2 = M$ is a constant. Hence, the optimal ${\bf\Phi}$ can also maximize $|z|^2$.
We derive that
\begin{equation}\label{Eq:z}
z = \sum_{n=1}^N {e^{j2\pi \frac{d}{\lambda}(n-1)(\sin\theta_{{\rm AoD},2}-\sin\theta_{{\rm AoA},1})+j\phi_n }}.
\end{equation}

The ergodic capacity depends on $\theta_{{\rm AoA},1}$ and $\theta_{{\rm AoD},2}$ but is independent of $\theta_{{\rm AoD},1}$. Hence, the phase shift can only affect the links that are directly connected by the LIS. Meanwhile, $0\le|z|\le N$. If $|z| = 0$, then the LoS component of the assistant channel is completely blocked due to the inadequate setting of the phase shift amount. Then, the power transmitted along the LoS component is wasted, and increasing the number of reflectors only makes limited contributions to the ergodic capacity which is gained by the NLoS component. This phenomenon further demonstrates the significance of a proper phase shift design.

Therefore, in order to maximize $|z|^2$, the optimal phase shift on the $n$th reflector element of the LIS should be
\begin{equation}\label{Eq:optphin}
\phi_{{\rm opt},n} = 2\pi \frac{d}{\lambda}(n-1)(\sin\theta_{{\rm AoA},1}-\sin\theta_{{\rm AoD},2}).
\end{equation}
That is, the wireless signal is reflected and sent out along the LoS component of the assistant channel\footnote{Note that the acquisition of $\theta_{{\rm AoA},1}$ and $\theta_{{\rm AoD},2}$ at the LIS is left for future work. One possible solution is to measure $\theta_{{\rm AoA},1}$ when initially set up the LIS, and then $\theta_{{\rm AoD},2}$ using pilots sent from the user.}.
When adopting the optimal phase shift design, $z=N$ holds, and
\begin{equation}\label{Eq:Rmaxapp}
C_{\rm max} \approx  \log_2 \left(1+PM(\gamma_1 N^2+\gamma_2 N+1)\right).
\end{equation}
The comparison of \eqref{Eq:Rmaxapp} and \eqref{Eq:Case1} indicates that the existence of LoS component is beneficial at all times if the phase shift amount is well-designed. Moreover, the ergodic capacity is proportional to $N$. Increasing the number of reflectors helps enhancing the receiving power at the user side, thereby further improving the ergodic capacity of the LIS-assisted large-scale antenna system.

\subsection{Influence of Bit Quantization}

In practical systems, the phase shift amount is constrained by the quantization bits of the LIS. We denote the number of quantization bits as $B$. Then, each theoretical value $\phi_n$ is quantized to its nearest value in
\begin{equation}\label{Eq:QuanValue}
\left\{0, \frac{2\pi}{2^B},\ldots,\frac{2\pi(2^B-1)}{2^B}\right\}.
\end{equation}
The following theorem provides guidance for the selection of LISs with different phase shift precisions.

\begin{theorem}\label{theorem2}
To promise an acceptable ergodic capacity degradation of $\xi$ bits/s/Hz compared with using optimal full-resolution phase shift amounts, the number of quantization bits of the LIS should satisfy
\begin{equation}\label{Eq:Brequire}
B\ge \log_2\pi -\log_2\arccos\sqrt{(1+\alpha N^{-2})2^{-\xi}-\alpha N^{-2}},
\end{equation}
where
\begin{equation}\label{Eq:alpha}
\alpha = \frac{1}{\gamma_1}\left(\gamma_2 N+1+\frac{1}{PM}\right).
\end{equation}
\end{theorem}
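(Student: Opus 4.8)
The plan is to propagate the $B$-bit rounding error through the chain of approximations already established: bound the loss it causes in $|z|^{2}$, substitute that into the capacity expression of \emph{Theorem~\ref{theorem1}}, and then invert the inequality ``capacity degradation $\le\xi$'' to isolate $B$.

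First I would quantify the per-element error. Rounding the optimal phase $\phi_{{\rm opt},n}$ of \eqref{Eq:optphin} to the nearest point of the grid \eqref{Eq:QuanValue}, whose spacing is $2\pi/2^{B}$, leaves an error $\delta_{n}$ with $|\delta_{n}|\le\pi/2^{B}$. Inserting the quantized phases into \eqref{Eq:z} gives the residual sum $z=\sum_{n=1}^{N}e^{j\delta_{n}}$, and therefore
\begin{equation*}
|z|^{2}=\Bigl(\sum_{n=1}^{N}\cos\delta_{n}\Bigr)^{2}+\Bigl(\sum_{n=1}^{N}\sin\delta_{n}\Bigr)^{2}\ \ge\ \Bigl(\sum_{n=1}^{N}\cos\delta_{n}\Bigr)^{2}\ \ge\ N^{2}\cos^{2}\frac{\pi}{2^{B}},
\end{equation*}
the last step using $\cos\delta_{n}\ge\cos(\pi/2^{B})\ge 0$, which holds once $B\ge 1$ (and for any angles $\theta_{{\rm AoA},1},\theta_{{\rm AoD},2}$).

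Next I would feed this into \emph{Theorem~\ref{theorem1}}. Writing the defining relation \eqref{Eq:alpha} of $\alpha$ as $\gamma_{2}N+1+1/(PM)=\gamma_{1}\alpha$, the capacity approximation evaluated at the quantized ${\bf\Phi}$ becomes $C_{\rm quan}\approx\log_{2}\!\bigl(PM\gamma_{1}(|z|^{2}+\alpha)\bigr)\ge\log_{2}\!\bigl(PM\gamma_{1}(N^{2}\cos^{2}(\pi/2^{B})+\alpha)\bigr)$, whereas \eqref{Eq:Rmaxapp} reads $C_{\rm max}\approx\log_{2}\!\bigl(PM\gamma_{1}(N^{2}+\alpha)\bigr)$. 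Hence the degradation satisfies $C_{\rm max}-C_{\rm quan}\le\log_{2}\frac{N^{2}+\alpha}{N^{2}\cos^{2}(\pi/2^{B})+\alpha}$. Requiring this to be at most $\xi$, clearing the denominator and solving for $\cos^{2}(\pi/2^{B})$ gives $\cos^{2}(\pi/2^{B})\ge(1+\alpha N^{-2})2^{-\xi}-\alpha N^{-2}$; since $\pi/2^{B}\in(0,\pi/2]$ and $\cos$ is strictly decreasing there, taking square roots, then $\arccos$, then $\log_{2}$ yields exactly \eqref{Eq:Brequire}.

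The step that deserves the most care is the bound on $|z|^{2}$: I am dropping $(\sum_{n}\sin\delta_{n})^{2}$ and replacing $\sum_{n}\cos\delta_{n}$ by its worst case $N\cos(\pi/2^{B})$, so I must verify that this can only decrease $|z|^{2}$ — it can, because each $\cos\delta_{n}$ is non-negative — and hence, by monotonicity of $t\mapsto\frac{N^{2}+\alpha}{N^{2}t+\alpha}$, can only inflate the degradation bound; this is what makes \eqref{Eq:Brequire} a genuinely \emph{sufficient} condition. Two sanity checks: as $\xi\to0$ the radicand tends to $1$, forcing $B\to\infty$, as it must; and when $\xi$ is large enough that the radicand is non-positive the requirement is vacuous, so any $B$ suffices.
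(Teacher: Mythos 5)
Your proof is correct and follows essentially the same route as the paper: lower-bound $|z|^2$ by $N^2\cos^2(\pi/2^{B})$, substitute into the capacity approximation of \emph{Theorem~\ref{theorem1}}, and invert the constraint $C_{\rm max}-C_{\rm BQ}\le\xi$ to isolate $B$. If anything, your justification of the key bound --- $|z|^2\ge\bigl(\sum_n\cos\delta_n\bigr)^2\ge N^2\cos^2(\pi/2^{B})$ because each $\cos\delta_n\ge\cos(\pi/2^{B})\ge0$ when $B\ge1$ --- is cleaner and more general than the paper's, which assumes $N$ even and simply asserts that the worst case is half the errors at $+\pi/2^{B}$ and half at $-\pi/2^{B}$.
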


\begin{proof}
Based on \eqref{Eq:optPhi2} and \eqref{Eq:Rmaxapp}, we write the approximated ergodic capacity under bit quantization constraint as
\begin{equation}\label{Eq:Rbqapp}
C_{\rm BQ} \approx  \log_2 \left(1+PM(\gamma_1 |z|^2+ \gamma_2 N+1)\right).
\end{equation}
For the $n$th element of the LIS, we denote the quantization error of the phase shift amount as $\delta_n$, which satisfies
\begin{equation}\label{Eq:QuanErrorValue}
-\frac{2\pi}{2^{B+1}} \le \delta_n \le \frac{2\pi}{2^{B+1}}, n = 1,\ldots,N.
\end{equation}
In this condition, when the optimal phase shift amounts are employed, it holds that $z=\sum_{n=1}^N {e^{j\delta_n }}$.
Suppose that $N$ is even. Since $B \ge 1$, according to \eqref{Eq:QuanErrorValue}, we obtain
\begin{equation}\label{Eq:xQuanError1}
|z|^2 \ge \left| \frac{N}{2}( e^{j\frac{2\pi}{2^{B+1}}}+e^{-j\frac{2\pi}{2^{B+1}}})\right|^2 = N^2 \cos^2\left( \frac{\pi}{2^{B}} \right).
\end{equation}
It is obvious that degradation occurs under bit quantization constraint because $C_{\rm BQ} \le C_{\rm max}$.
We denote $C_{\rm max}-C_{\rm BQ}$ as the degradation amount of the ergodic capacity. Then, we discuss how many bits are sufficient to promise the acceptable degradation of $\xi$, i.e., $C_{\rm max}-C_{\rm BQ} \le \xi$.
According to \eqref{Eq:Rmaxapp} and \eqref{Eq:Rbqapp}, we derive that
\begin{equation}\label{Eq:xQuanError2}
|z|^2 \ge(N^2+\alpha)2^{-\xi}-\alpha,
\end{equation}
Recalling \eqref{Eq:xQuanError1}, we obtain the requirement on the quantization bit as
\begin{equation}\label{Eq:bitrequire}
N^2 \cos^2\left( \frac{\pi}{2^{B}} \right) \ge(N^2+\alpha)2^{-\xi}-\alpha,
\end{equation}
which is further translated to \eqref{Eq:Brequire}.
\end{proof}

\emph{Theorem \ref{theorem2}} indicates that if the assistant channel is under Rayleigh fading condition, that is, $\alpha\to \infty$, then \eqref{Eq:bitrequire} always holds whatever the value of $B$ is. This conclusion is in accordance with the analytical results in Case 1 of \emph{Section \ref{Sec:CapacityAnalysis}}. Moreover, when $\alpha>0$ remains unchanged, if $\xi =0$, then $B\ge\infty$. This crucial requirement is released gradually as $\xi$ increases. These reasonable findings demonstrate the correctness of \emph{Theorem \ref{theorem2}}.

Then, according to \emph{Theorem \ref{theorem2}}, the minimum value of $B$ is inversely proportional to either $P$, $M$, or $N$. Thus, the LIS-assisted large-scale antenna system becomes less sensitive to the bit quantization constraint if the transmit power is enhanced or either the antenna array or the LIS is enlarged. We take $M=N=64$, $K_1=K_2=10$, and $P=0$ dB as an example. When $\xi=1$, $B\ge2$ holds. Consequently, the ergodic capacity degradation does not exceed 1 bit/s/Hz as well, if we remain $B=2$ and set $P=10$ dB, $N=128$. That is, the 2-bit phase shift is sufficient to ensure ergodic capacity degradation of less than 1 bit/s/Hz.

\section{Numerical Results}\label{Sec:Results}

In this section, we examine the tightness of the approximation of the ergodic capacity and evaluate the optimal phase shift design for the LIS-assisted large-scale antenna system. We set $M=64$, and $K_1=K_2$. The transmit signal-to-noise ratio (SNR) equals 10 dB. Angles of the LoS components in the assistant channel is randomly set within $[0,2\pi)$.

We initially test the tightness of the ergodic capacity approximation presented in \emph{Theorem \ref{theorem1}}. Here, the optimal phase shift is adopted. We take 10,000 times of Monte Carlo simulations. The results are shown in Fig.~\ref{Fig:Simulation1}. The approximations are tightly closed with the Monte Carlo results. With the increase of the Rician $K$-factor, the gap between the Monte Carlo results and the approximations diminishes. When the Rician $K$-factor grows to infinity, the ergodic capacity approaches a constant, as discussed in \emph{Case 2}. These results demonstrate the correctness of \emph{Theorem \ref{theorem1}}. Hence, designing the phase shift amount on the basis of the approximations is reliable.

\begin{figure}
  \centering
  \includegraphics[scale=0.6]{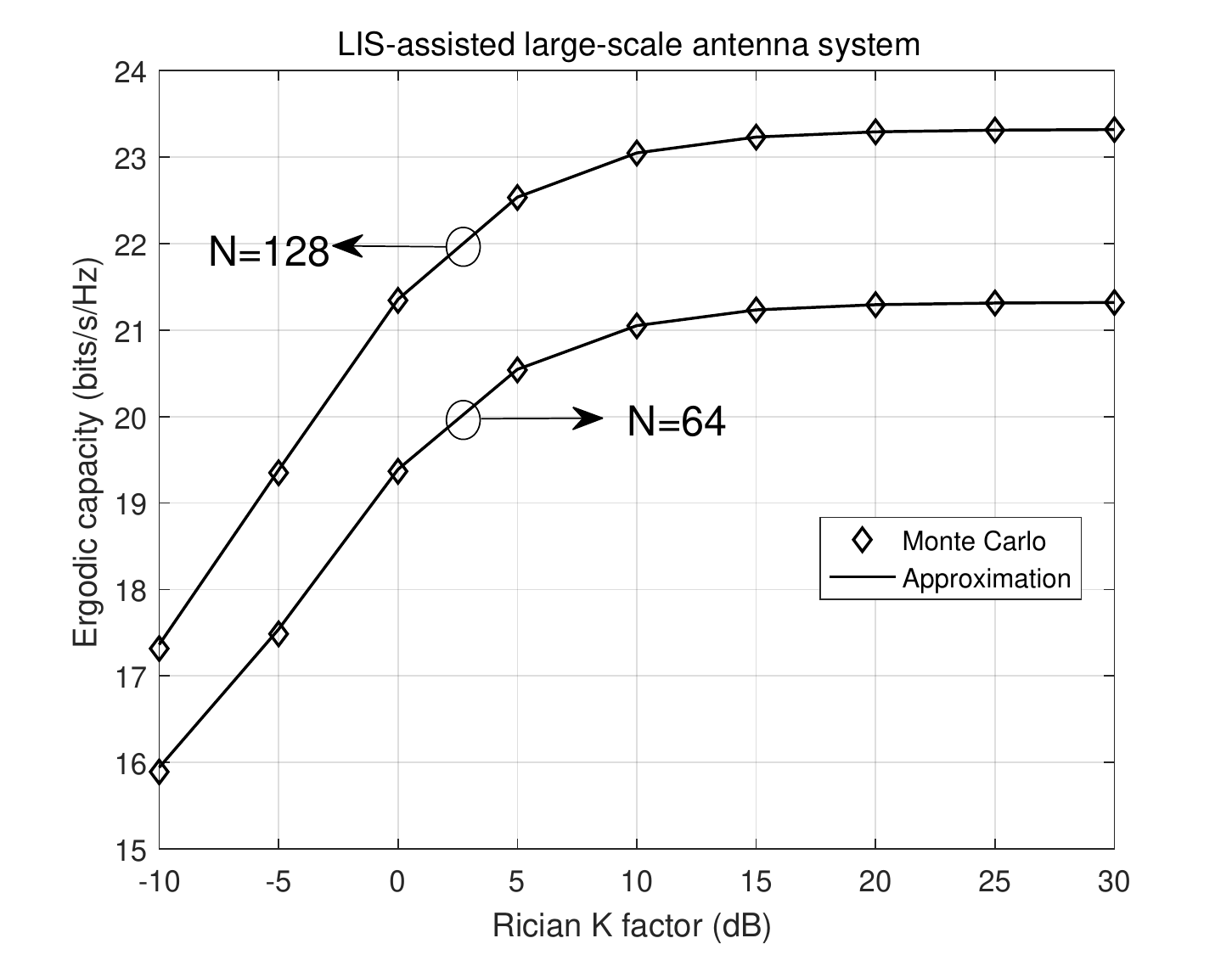}
  \caption{Comparison of the Monte Carlo results and the approximations of the ergodic capacity.} \label{Fig:Simulation1}
\end{figure}

Next, we test the effectiveness of the proposed optimal phase shift design. For comparison, we evaluate the case when the phase shift amount is randomly set, and the results of 10,000 types of random amount are collected. We further investigate the ergodic capacity under Rayleigh fading condition of $K_1=K_2=10$ dB. The results are shown in Fig.~\ref{Fig:Simulation2}. Firstly, the ergodic capacity of the optimal phase shift design is considerably higher than that of using random phase shift amount. When the number of reflectors increases, this advantage is further enlarged. Moreover, using random phase shifts under Rician fading condition achieves even lower ergodic capacity than under Rayleigh fading condition because using inadequate phase shift amount may damage the assistant channel. Therefore, proper design of phase shift amount is essential in the LIS-assisted large-scale antenna system.

\begin{figure}
  \centering
  \includegraphics[scale=0.6]{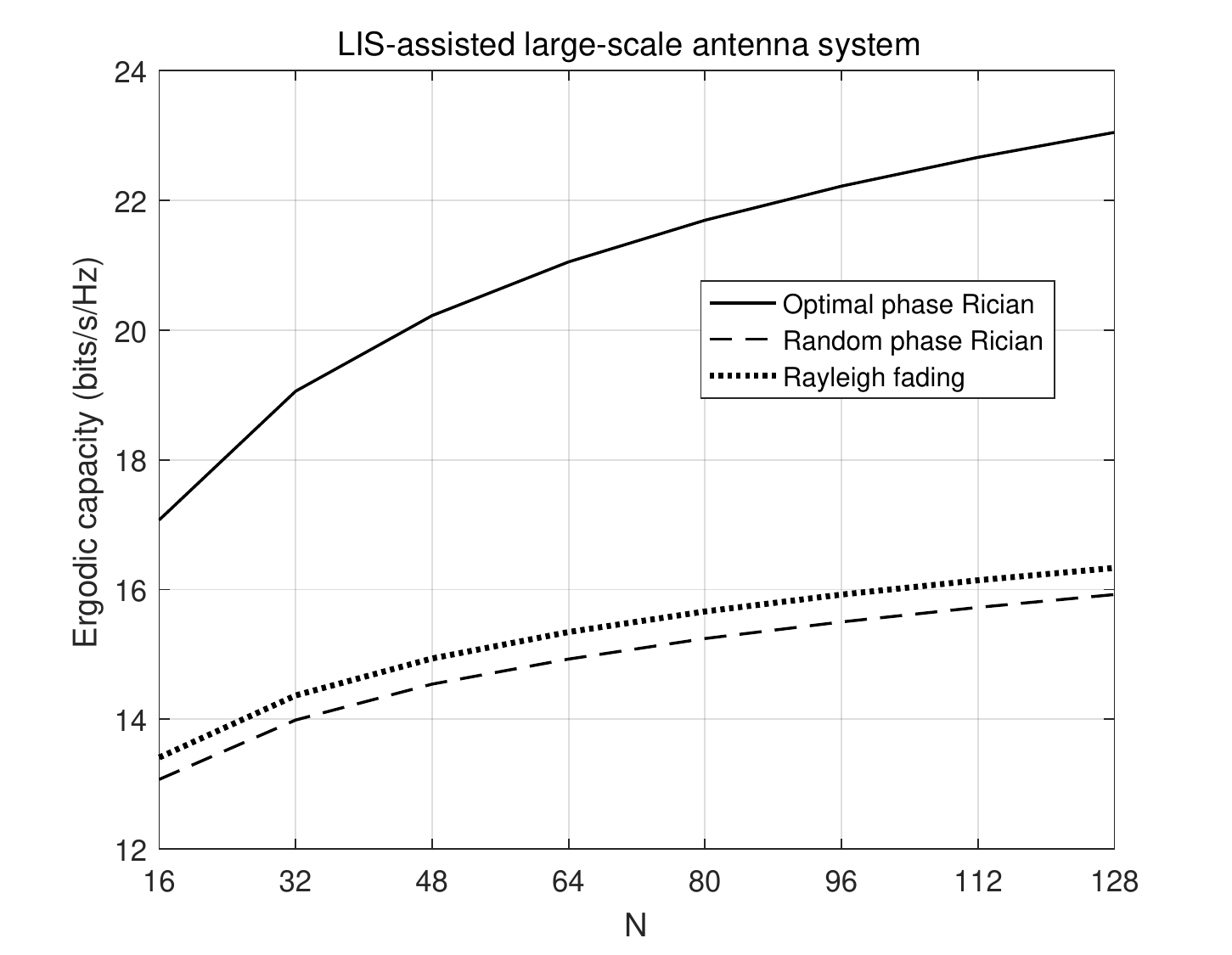}
  \caption{Comparison of ergodic capacity performances when adopting the optimal and random phase shift amounts under Rician fading condition and the ergodic capacity performance under Rayleigh fading condition.} \label{Fig:Simulation2}
\end{figure}

Finally, we test the performance degradation when bit quantization is considered upon the phase shifts in practical systems. As shown in Fig.~\ref{Fig:Simulation3}, the ergodic capacity decreases more than 1 bit/s/Hz if 1-bit quantization is adopted. The performance gap between the ergodic capacity of using perfect phase shifts and that of using quantized phase shifts decreases with the increase of the quantization bit. The performance degradation is below 1 bit/s/Hz when using 2-bit quantization, which is in accordance with the theoretical results derived from \emph{Theorem \ref{theorem2}}.

\begin{figure}
  \centering
  \includegraphics[scale=0.6]{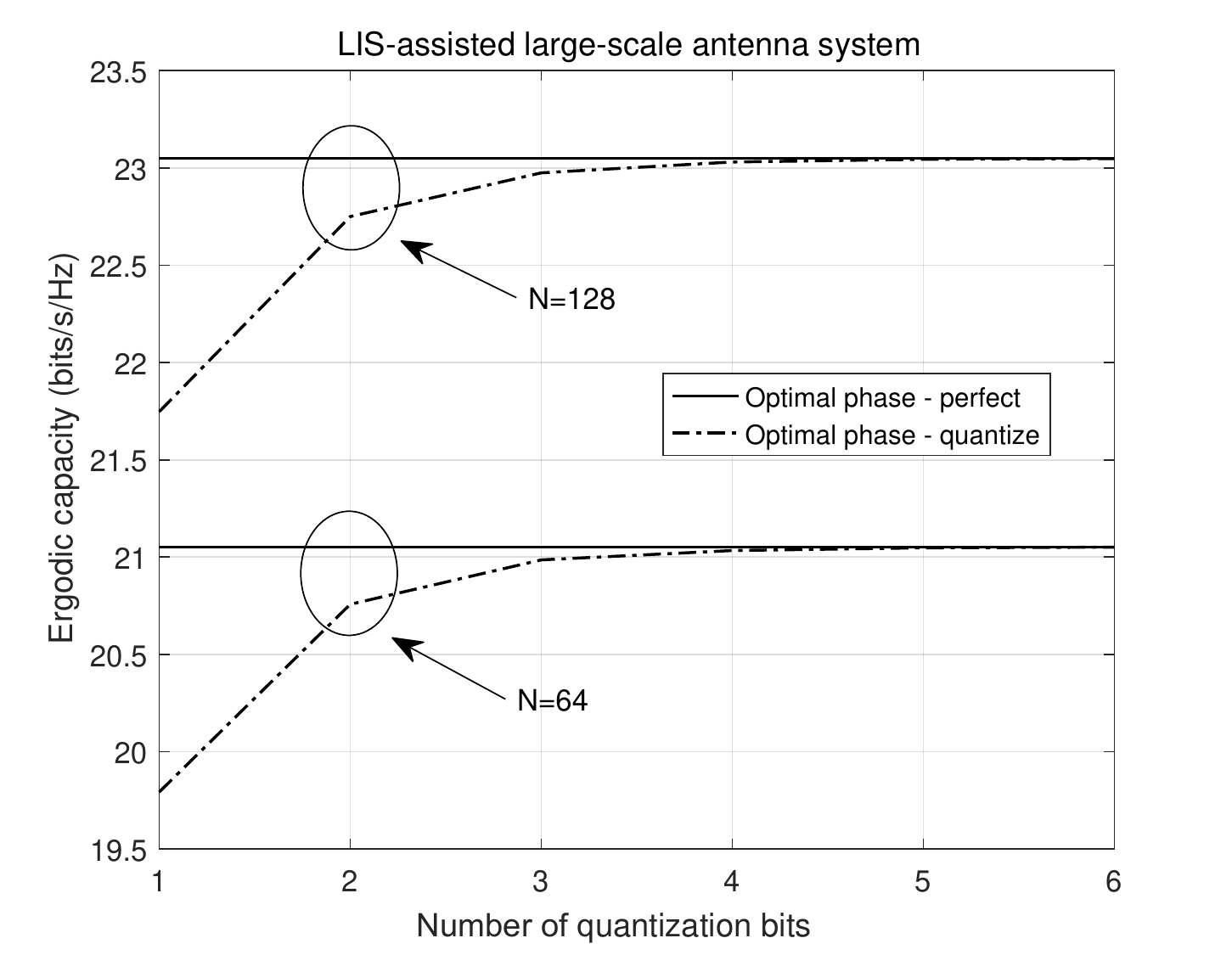}
  \caption{Comparison of ergodic capacity performances under different bit quantization constraints.} \label{Fig:Simulation3}
\end{figure}

\section{Conclusion}

In this study, we evaluated the ergodic capacity of the LIS-assisted large-scale antenna system by formulating an approximation expression and discussed the significance of using a proper phase shift design based on this expression. Particularly, we proposed an optimal phase shift design to maximize the ergodic capacity and obtained the requirement on the quantization bits to ensure an acceptable capacity degradation.
Tightness of the approximation was verified through Monte Carlo simulations, and the proposed phase shift design was proved to considerably enhance the capacity performance than using random phase shifts. Moreover, the numerical results showed that 2-bit quantization can sufficiently guarantee high capacity, which is in accordance with the analytical results.

\end{document}